\documentclass[12pt,reqno]{amsart}
\usepackage{amsthm,amsfonts,amssymb,euscript}
\usepackage{xcolor}

\def\emph#1{{\it #1}}
\def\textbf#1{{\bf #1}}

\newcommand{\bea}{\begin{eqnarray}}
\newcommand{\eea}{\end{eqnarray}}
\def\beaa{\begin{eqnarray*}}
\def\eeaa{\end{eqnarray*}}

\def\ub{{\underline{u}}}

\def\ba{\begin{array}}
\def\ea{\end{array}}
\def\be#1{\begin{equation} \label{#1}}
\def \eeq{\end{equation}}

\newcommand{\nn}{\nonumber}

\def\nn{\nonumber}

\def\tr{\mbox{tr}}

\renewcommand{\div}{\mbox{div }}

\def\a{\alpha}
\def\alphab{{\underline\alpha}}
\def\b{\beta}
\def\betab{{\underline\beta}}

\def\aa{\alphab}
\def\bb{\betab}

\def\ga{\gamma}
\def\Ga{\Gamma}
\def\de{\delta}
\def\De{\Delta}
\def\ep{\epsilon}

\def\la{\lambda}

\def\si{\sigma}

\def\om{\omega}
\def\omegab{{\underline\omega}}
\def\Om{\Omega}

\def\th{\theta}

\def\ze{\zeta}
\def\nab{\nabla}

\def\bb{\underline{\b}}
\def\lap{\Delta}

\newcommand{\trchb}{\tr \chib}

\newcommand{\chih}{\hat{\chi}}
\newcommand{\chib}{\underline{\chi}}

\newcommand{\etab}{\underline{\eta}}
\newcommand{\chibh}{\underline{\hat{\chi}}\,}
\newcommand{\omb}{{\underline{\om}}}
\def\f14{\frac{1}{4}}
\def\f12{{\frac{1}{2}}}

\def\c{\cdot}

\newcommand{\Ls}{{\mathcal L} \mkern-10mu /\,}

\newcommand{\DD}{{\mathcal D}}

\def\ub{\underline{u}}

\def\Lb{{\underline{L}}}

\def\Hb{{\underline{ H}}}
\def\pr{\partial}

\def\chih{\hat{\chi}}
\def\trch{\mbox{tr}\chi}

\setlength{\textwidth}{18cm} \setlength{\oddsidemargin}{0cm}
\setlength{\evensidemargin}{0cm}
\parindent = 0 pt
\parskip = 12 pt
\usepackage{graphics}
\usepackage{graphicx}
\DeclareGraphicsRule{.tif}{png}{.png}{`convert #1 `basename #1 .tif`.png}

\begin{document}
\theoremstyle{plain}
  \newtheorem{theorem}{Theorem}
  \newtheorem{conjecture}{Conjecture}
  \newtheorem{proposition}{Proposition}
  \newtheorem{lemma}{Lemma}
  \newtheorem{corollary}{Corollary}

\theoremstyle{remark}
  \newtheorem{remark}{Remark}
  \newtheorem{remarks}{Remarks}

\theoremstyle{definition}
  \newtheorem{definition}{Definition}

\include{psfig}

\author{Sergiu Klainerman}
\address{Department of Mathematics, Princeton University,
 Princeton NJ 08544}
\email{ seri@math.princeton.edu}
\title[Trapped surfaces]{A   Fully Anisotropic Mechanism for Formation of Trapped Surfaces in Vacuum }

\author{Jonathan Luk}
\address{Department of Mathematics, University of Pennsylvania, Philadelphia PA 19104}
\email{jwluk@sas.upenn.edu}

\author{Igor Rodnianski}
\address{Department of Mathematics, MIT, 
Cambridge, MA 02139}
\email{ irod@math.mit.edu}
\subjclass{35Q76.\newline
}
\vspace{-0.3in}
\begin{abstract}
We  present a new, fully anisotropic,   criterion for formation of trapped surfaces in vacuum.
More precisely we   provide  local   conditions   on  null data,  concentrated 
in a neighborhood of  a   short null geodesic segment (possibly   flat in all other directions)    whose future development
contains a trapped  surface. This     extends    considerably   the   previous result of 
 Christodoulou \cite{Chr:book}     which     required  instead    a    uniform  condition along all   null geodesic  generators.     To obtain our result we combine Christodoulou's mechanism for the formation of a trapped surface with a new   deformation process which takes place along  incoming null
  hypersurfaces.
\end{abstract}
\maketitle
\section{Introduction}
According to the  celebrated  incompleteness  result of   Penrose,     the   future   Cauchy  development of a     non-compact initial data
 set of the Einstein  vacuum\footnote{The result  of    Penrose applies in fact  
   to the more general   Einstein-matter  equations satisfying the null energy condition,  but  we restrict  our  considerations   here  to   the vacuum case.}
      equations, 
      $$\mbox{Ric}(g)=0$$
           which contains a \textit{trapped} surface,  must be    incomplete.    
   Thus, in a sense,   the fundamental issue       of  formation  of   spacetime   singularities   in gravitational collapse 
        is  reduced   to the somewhat    more tangible       problem   of formation 
    of    trapped surfaces.   This, on the other hand, is still  a  highly  non-trivial      problem.   Indeed, the expansions  of both null geodesic congruences generated by a  compact, trapped  surface  $S$  is required, by definition,  to be negative at every point on $S$.
    To  show that  such   surfaces can form in evolution,   starting   with   regular  initial data sets  which contain no
    trapped  surfaces,  requires    a deep   understanding    of the    dynamics of the    Einstein equations.  It is for this  reason
    that the  problem  has  remained open for     more than forty years,  in the wake of   Penrose's result,   until the  recent    breakthrough 
    of Christodoulou.   In      \cite{Chr:book}    he was able to  identify  an open  set of regular\footnote{Smooth and  free
    of   trapped surfaces.}  initial conditions,   on a finite outgoing null hypersurface, with  trivial data  on an incoming      null hypersurface,        whose  future development must  contain a trapped surface.   The main  condition in   Christodoulou's    result is that the      data  verify a     uniform   lower bound   condition,   with respect to all   short,    null  geodesic  generators of the  outgoing  initial      null hypersurface.
 The goal of this paper is to  significantly  relax    this uniform condition by showing    that  a trapped surface  forms 
   even  if   the   local    null outgoing  data is   only 
    concentrated  in a neighborhood of  a   short null geodesic segment (possible   flat in all other directions).   \\
   
    We recall that Christodoulou's   proof  in \cite{Chr:book}
 rests on two main ingredients:   
 \begin{enumerate}
 \item    A semi-global existence result  for the   characteristic   initial   value problem
   with large   initial  data\footnote{On the outgoing null hypersurface. The incoming data is flat. }   measured      relative  to a small parameter $\de>0$.  The precise  dependence on $\de$,  which Christodoulou  calls 
   the  short  pulse method,  was        subsequently   relaxed in  \cite{K-R:trapped}, \cite{K-R:scarred};
    see also \cite{R-T}.
    In all  these  results  the data on the incoming null hypersurface is assumed to be flat. This   restriction has been recently removed in  \cite{L-R}.  
    
     The semi-global result  allows one to construct  the   future development of the  initial data,  together with a double null foliation\footnote{Such that     the initial configuration is given by   the  incoming $\{\ub=0\}$ and outgoing   $\{u=0\} $    initial   null hypersurfaces.}  $(u,\ub)$,   $0\le u\le  u_*$,  $0\le\ub\le  \ub_*$,     and     full control  on  all the geometric quantities  associated to it.
   
   \item  An  amplification mechanism for the  integrals        $\int_\ga|\chih|^2 $   along  
    outgoing null geodesic segments $\ga$,     (with $\chih$   denoting the outgoing   null  shear). This mechanism,   which requires 
      the estimates obtained in the constructive step (1),   combines 
         with  a  uniform lower bound      assumption of these integrals  on the 
    initial   null hypersurface,  and leads to the formation
     of a trapped surface. It is important to note     that   all       trapped surfaces  found    in this fashion belong to the concentric spheres  generated by  
      the double null foliation $(u,\ub)$, i.e.   are  of the form $S=\{u=u_1, \ub=\ub_1\}$, for some
      $0 <u_1\le u_*$, $0<\ub_1\le\ub_*$. 
    
 \end{enumerate}
The new  result we present in this  paper    relies  heavily on the  \textit{hard} part of the 
above results,  i.e. the construction of the spacetime    in (1).   We modify however     part (2)
 by   combining   Christodoulou's argument with a  new  deformation argument along the incoming null
  hypersurfaces  $\{\ub=\mbox{const}\}$.  This allows us to dramatically weaken    his uniform condition
    merely to a localized condition in a neighborhood of a null geodesic   of $\{u=0\}$.   The deformation
     is determined  by       solving       a        simple      elliptic  inequality       on the     standard sphere $S_{0,0}=\{\ub =0\} \cap   \{u=0\}$,
      see \eqref{maineqR}.    We note that
    the trapped surface  we find by    our argument does  not  belong any longer to the double
     null foliation constructed  in step (1).

\subsection{Geometry of a  double null foliation} As in   \cite{K-R:trapped}
   we consider a region $\DD=\DD(u_*,\ub_*)$ of a vacuum spacetime $(M,g)$,
   \beaa
   \mbox{Ric}(g)=0,
   \eeaa
    spanned by a double null foliation  generated by the optical functions  $(u,\ub)$  increasing towards the future\footnote{These can be compared to the optical functions $u=\frac{t-r+1}{2}$, $\ub=\frac{t+r-1}{2}$ in Minkowski spacetime.},  where $0\le u\le u_*$ and $0\le\ub\le  \ub_*$ (see Figure 1). These spacetimes will be constructed via solving a characteristic initial value problem.

\begin{minipage}[!t]{0.4\textwidth}
    \includegraphics[width=4.2in]{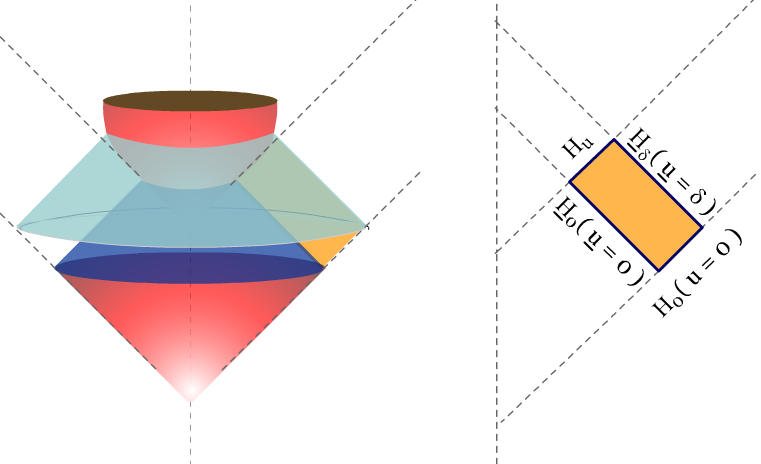}
\end{minipage}
\hspace{0.2\textwidth}
\begin{minipage}[!t]{0.4\textwidth}
The shaded region  on the right represents the domain $\DD(u_*,\ub_*)$,
$\ub_*= \de$.  The same picture is represented   on the left, by    emphasizing      that all points  in the
 diagram on the right are in fact  $2$-surfaces.
\end{minipage}
    We denote by $H_u$ the outgoing  null hypersurfaces generated by the  level surfaces of $u$  and     by  $\underline{H}_{\ub}$ the incoming  null hypersurfaces generated  level hypersurfaces  of  $\ub$.  We write 
$S_{u,\ub}=H_u\cap \underline{H}_{\ub}$ and   denote by $H_{u}^{(\ub_1,\ub_2)}$  and $\underline{H}_{\ub}^{(u_1,u_2)}$ the regions of these null hypersurfaces   defined by $\ub_1\le\ub\le\ub_2$ and respectively $u_1\le u\le u_2$.
    Let $L=-2g^{\a\b}\pr_\a u\pr_\b,\,\, \Lb=-2g^{\a\b}\pr_\a \ub\pr_\b,\,\,$ be the geodesic vectorfields associated to the two foliations and  define, 
    \bea
    g(L,\Lb):=-2\Om^{-2}=4g^{\a\b}\pr_\a u\pr_\b\ub. \label{eq:def.omega}
    \eea
As is well known,  the  space-time slab  $\DD(u_*, \ub_*)$  is completely 
determined  (for small values of $u_*, \ub_*$)   by  data along the null  hypersurfaces $H_0$,  $\Hb_0$ corresponding to
 $\ub=0$ and $u=0$ respectively.
We assume that
 $\Om=1$ along $H_0$ and $\underline{H}_0$, i.e.,
\bea
\Om(0,\ub)=1, \qquad 0\le \ub\le\ub_*,\\
\Om(u,0)=1, \qquad 0\le u\le u_*.
\eea
 We denote  by  $r=r(u,\ub)$  the area-radius of the $2$-surface $S=S_{u,\ub}$,  i.e. $|S_{u,\ub}|=4\pi r(u,\ub)^2$. In this paper, we assume $r(0,0)=1$. \footnote{General values for $r(0,0)$ can be recovered from this special case via a simple rescaling argument.}

   Throughout this paper we   work with the normalized null pair $(e_3,e_4)$ defined by
$$
e_3=\Omega\Lb,\quad e_4=\Omega L$$
which satisfy
$$\qquad g(e_3,e_4)=-2.$$
Given a   2-surfaces  $S_{u,\ub}$  and   $(e_a)_{a=1,2}$   an arbitrary  frame tangent to it  we define   the Ricci coefficients,
 \bea
\Ga_{(\la)(\mu)(\nu)}=g(e_{(\la)}, D_{e_{(\nu)}} e_{(\mu)} ),\quad \la,\mu ,\nu =1,2,3,4.
 \eea
 These coefficients are completely determined by the following components,
 \begin{equation}
\begin{split}
&\chi_{ab}=g(D_a e_4,e_b),\, \,\, \quad \chib_{ab}=g(D_a e_3,e_b),\\
&\eta_a=-\frac 12 g(D_3 e_a,e_4),\quad \etab_a=-\frac 12 g(D_4 e_a,e_3)\\
&\omega=-\frac 14 g(D_4 e_3,e_4),\quad\,\,\, \omegab=-\frac 14 g(D_3 e_4,e_3),\\
&\ze_a=\frac 1 2 g(D_a e_4,e_3)
\end{split}
\end{equation}
where $D_a=D_{e_{(a)}}$. We also introduce the  null curvature components,
 \begin{equation}
\begin{split}
\a_{ab}&=R(e_a, e_4, e_b, e_4),\quad \, \,\,   \aa_{ab}=R(e_a, e_3, e_b, e_3),\\
\b_a&= \frac 1 2 R(e_a,  e_4, e_3, e_4) ,\quad \bb_a =\frac 1 2 R(e_a,  e_3,  e_3, e_4),\\
\rho&=\frac 1 4 R(Le_4,e_3, e_4,  e_3),\quad \si=\frac 1 4  \,^*R(e_4,e_3, e_4,  e_3).
\end{split}
\end{equation}
Here $\, ^*R$ denotes the Hodge dual of $R$.  We denote by $\nab$ the 
induced covariant derivative operator on $S_{u,\ub}$ and by $\nab_3$, $\nab_4$
the projections to $S_{u,\ub}$ of the covariant derivatives $D_3$, $D_4$.
Observe that,
\begin{equation}\label{eqnOm}
\begin{split}
&\omega=-\frac 12 \nab_4 (\log\Omega),\qquad \omegab=-\frac 12 \nab_3 (\log\Omega),\\
&\eta_a=\zeta_a +\nab_a (\log\Omega),\quad \etab_a=-\zeta_a+\nab_a (\log\Omega).
\end{split}
\end{equation}
We recall the integral formulas\footnote{see for example  Lemma 3.1.3 in \cite{KNI:book}.}
for a  scalar function $f$ in $\DD$,
\bea
\frac{d}{ d\ub} \int_{S_{u,\ub}} f &=& \int_{S_{u,\ub}} \big(\frac{df}{d\ub}+ \Om \trch  f\big)= \int_{S_{u,\ub}} \Om \big(e_4(f)+ \trch  f\big)\nn,\\
\frac{d}{ du } \int_{S_{u,\ub}} f &=&  \int_{S_{u,\ub}} \big(\frac{df}{du}+\Om \trchb  f\big)= \int_{S_{u,\ub}}\Om \big(e_3(f)+ \trchb f\big). \label{form:transp-f-u}
\eea
In particular,
\bea
\frac{dr}{d\ub}=\frac{1}{8\pi}\int_{S_{u,\ub}}\Om\trch, \qquad \frac{dr}{ du }=\frac{1}{8\pi}\int_{S_{u,\ub}}\Om\trchb.\label{deriv-r}
\eea
We also recall the  following commutation formulae\footnote{Here, $\in_{ab}$ denotes the area 2-form and $\div$    the divergence operator  on the sphere $S_{u,\ub}$. We refer the readers to \cite{K-R:trapped} for the precise definitions of all of the notations below.} between $\nab$ and $\nab_4, \nab_3$ in \cite{KNI:book}:
\begin{lemma}
\label{le:comm}
For a scalar function $f$:
\bea
\,[\nab_4,\nab] f&=&\frac 12 (\eta+\etab) D_4 f -\chi\cdot \nab f,\label{comm:nabnab4}\\
\,[\nab_3,\nab] f&=&\frac 12 (\eta+\etab) D_3 f -\chib\cdot \nab f.
\label{comm:nabnab3}
\eea

For a 1-form  tangent to $S$:
\beaa
\,[\nab_4,\nab_a] U_b&=& -\chi_{ac} \nab_c U_b +\in_{ac}\, ^*\beta_b U_c   
+\frac 12(\eta_a+\etab_a) D_4 U_b \\
&-&\chi_{ac} \,\etab_b\,  U_c +\chi_{ab} \,\etab\cdot U,\\
\,[\nab_3,\nab_a] U_b&=&-\chib_{ac} \nab_c U_b    +\in_{ac} ^*\betab_b U_c +
\frac 12(\eta_a+\etab_a) D_3 U_b   \\
 &-& \chib_{ac}  \eta_b\,U_c+\chib_{ab} \, \eta\cdot U.
\eeaa
In particular,
\beaa
\,[\nab_4,\div] U&=&-\frac 1 2 \trch  \, \div U-\chih\c \nab U   -\b\c U +\frac 1 2 (\eta+\etab)\c \nab_4 U-   \etab\c \chih\c U ,\\
\,[\nab_3,\div] U&=&-\frac 1 2 \trchb  \, \div U-\chibh\c \nab U +  \bb\c U +\frac 1 2 (\eta+\etab)\c \nab_3 U
-   \eta\c \chibh\c U .
\eeaa
\end{lemma}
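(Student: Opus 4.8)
The identities are recalled from \cite{KNI:book}; to reprove them the plan is to derive all of them, and the two ``in particular'' divergence identities, from a single source: the definition of $\nab_3,\nab_4$ as the orthogonal projections onto $S$ of the spacetime covariant derivatives $D_3,D_4$, together with the Ricci identity for $D$ and the definitions of the Ricci coefficients and of $\Om$ recalled above. I would treat the scalar case first, bootstrap to one-forms, and finally trace to obtain the divergence formulas.

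For a scalar $f$ one has $\nab_a f=e_a(f)$ and $\nab_4 f=D_4 f=e_4(f)$, so
\[
[\nab_4,\nab_a]f=\big(e_4(e_a f)-g(D_4 e_a,e_b)\,e_b(f)\big)-e_a(e_4 f)=[e_4,e_a](f)-g(D_4 e_a,e_b)\,e_b(f),
\]
the middle term being the tangential part of $D_4 e_a$ contracted with $\nab f$, the only correction present when $\nab_4$ acts on the one-form $\nab f$. I then split the frame commutator $[e_4,e_a]=D_4 e_a-D_a e_4$ along the null frame, using $g(e_3,e_4)=-2$, $g(D_a e_4,e_3)=2\ze_a$, $g(D_4 e_a,e_3)=-2\etab_a$ and $D_4 e_4=-2\om e_4$: the $e_3$-component vanishes, the $e_4$-component equals $\ze_a+\etab_a=\nab_a(\log\Om)=\f12(\eta_a+\etab_a)$ by \eqref{eqnOm}, and the tangential component is $g(D_4 e_a,e_b)e_b-\chi_{ab}e_b$. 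The first of these cancels the correction term above, leaving exactly \eqref{comm:nabnab4}; \eqref{comm:nabnab3} is the verbatim computation under $4\leftrightarrow 3$, $\chi\leftrightarrow\chib$, $\etab\leftrightarrow\eta$, $\ze\to-\ze$, using $\eta_a-\ze_a=\nab_a(\log\Om)$.

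For an $S$-tangent one-form $U$ the genuine spacetime curvature enters. I would write $D_a U=\nab_a U+(\text{normal part})$ and $D_4 U=\nab_4 U+(\text{normal part})$, the normal parts being algebraic in $\chi,\chib,\etab$ and $U$ (they encode the null second fundamental forms of $S$), and expand $[\nab_4,\nab_a]U_b$ into three pieces: (i) $\Pi\big(D_4 D_a U-D_a D_4 U\big)_b$; (ii) two ``re-projection'' terms obtained by differentiating the normal parts and projecting back onto $S$; and (iii) the correction $-\,g(D_4 e_a,e_c)(\nab_c U)_b$ arising because $\nab_4$ also acts on the derivative index of the $2$-tensor $\nab U$. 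The terms in (ii) are purely algebraic: since $D_4 e^{(3)}$ is proportional to $e^{(3)}$ (hence projects away) while $\Pi(D_4 e^{(4)})=-\etab$ and $\Pi(D_a e^{(3)})=-\f12\chi_{ab}e^{(b)}$ for the coframe $e^{(3)},e^{(4)}$ dual to $e_3,e_4$, they reduce to exactly $-\chi_{ac}\etab_b U_c$ and $\chi_{ab}\,\etab\cdot U$. For (i), the Ricci identity gives $D_4 D_a U-D_a D_4 U=D_{[e_4,e_a]}U+(\text{curvature contracted into }U)$; projecting $D_{[e_4,e_a]}U$ reproduces $-\chi\cdot\nab U+\f12(\eta+\etab)D_4 U$ exactly as in the scalar case, the antisymmetric $g(D_4 e_a,e_c)$ cancelling against (iii), while the curvature contraction---since $R(e_a,e_4,\cdot,\cdot)$ restricted to the tangent space of the $2$-surface $S$ is antisymmetric in its last pair and hence a multiple of the area form $\in$, that multiple being, in vacuum where $R$ is the tracefree Weyl tensor, precisely the null component $\b$ (equivalently its $S$-Hodge dual)---contributes $\in_{ac}\,^*\b_b U_c$. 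Collecting (i)--(iii) gives the formula for $[\nab_4,\nab_a]U_b$, and the one for $[\nab_3,\nab_a]U_b$ follows by the symmetric replacement $4\leftrightarrow 3$, $\chi\leftrightarrow\chib$, $\eta\leftrightarrow\etab$, $\b\leftrightarrow\bb$.

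The divergence identities follow by setting $b=a$ and summing: writing $\chi_{ab}=\chih_{ab}+\f12\trch\,\de_{ab}$ turns $-\chi_{ac}(\nab_c U)_a$ into $-\f12\trch\,\div U-\chih\cdot\nab U$, the contraction of $\in_{ac}\,^*\b_b U_c$ collapses (up to the convention-dependent sign) to $\mp\,\b\cdot U$, and the $\chi$--$\etab$ terms collapse to $-\etab\cdot\chih\cdot U$. Conceptually nothing is hard; the real work, and the main obstacle, is the bookkeeping in the one-form case---maintaining a clean separation between the genuine curvature term and the two distinct connection corrections (the intrinsic connection of $S$, and the $e_4$- resp.\ $e_3$-transport of the frame $(e_a)$), and verifying that every frame-dependent quantity cancels, so that the result is manifestly tensorial. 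A secondary point is the asymmetry between $e_4$- and $e_3$-derivatives of $\log\Om$ (only the first feeds $\om$, only the second $\omb$), but it enters solely through the combination $\f12(\eta+\etab)=\nab(\log\Om)$ and causes no trouble.
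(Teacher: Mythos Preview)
The paper does not prove this lemma at all: it is simply recalled from \cite{KNI:book} with the words ``We also recall the following commutation formulae\ldots'', and no argument is given. Your proposal therefore goes well beyond what the paper does.

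That said, your outline is the standard derivation and is correct in substance. The scalar case is exactly right: the decomposition of $[e_4,e_a]$ in the null frame, the vanishing of the $e_3$-component, the identification of the $e_4$-component with $\underline\eta_a+\zeta_a=\tfrac12(\eta_a+\underline\eta_a)$ via \eqref{eqnOm}, and the cancellation of the frame-dependent tangential piece $g(D_4 e_a,e_b)$ are all as you describe. For the one-form case your three-way split into (i) the projected Ricci identity, (ii) the re-projection (``second fundamental form'') corrections, and (iii) the transport of the derivative index is the right organization; the curvature contribution indeed reduces to a multiple of the area form times (the Hodge dual of) $\beta$, since in vacuum $R(e_a,e_4,e_b,e_c)$ is antisymmetric in $b,c$ on a $2$-plane. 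Your honest caveat that the one-form bookkeeping is the real work is apt: the individual terms are frame-dependent and only the final combination is tensorial, so one must track every piece. The trace to obtain the divergence identities is routine; your hedging on the sign of the $\beta\cdot U$ term is appropriate, as it depends on the Hodge-star convention, and you should also be aware that upon tracing the two $\chi$--$\underline\eta$ terms one picks up, in addition to $-\underline\eta\cdot\hat\chi\cdot U$, a residual $\tfrac12\,\mathrm{tr}\chi\,\underline\eta\cdot U$ which the stated divergence formula appears to omit---this is a minor discrepancy in the quoted formula rather than in your method.
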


\subsection{Main theorem}
For simplicity of our presentation\footnote{Similar results    can be  derived using  the  classes of initial data   discussed in 
 \cite {K-R:trapped},  \cite {K-R:scarred}  and \cite{L-R}.}  we   describe our main  result in the context  of 
 the class of initial data  used by Christodoulou. This class of initial data gives rise to a class of spacetimes endowed with a double null foliation as described in the previous subsection.  As in  \cite{Chr:book},
   we prescribe  the  null  incoming    data      to    be   trivial, i.e. corresponding   to  null cones 
   in Minkowski space.
In particular  $S_{0,0}$ is  the standard sphere of radius $1$.  To prescribe    null    data   on     $H_0$  amounts    to prescribe an arbitrary symmetric, traceless, smooth tensor     $\chih_0$,
 called initial shear. In \cite{Chr:book}, the initial shear is prescribed only in a region with a short characteristic length scale, i.e., $0\leq \ub \leq \de$.

 \begin{definition}
    Given $\de>0$ and $B>0$, we say that   a  smooth   shear     $\chih_0$,   supported on $H_0^{(0,\de)}$,  verifies     Christodoulou's  $\de$-short pulse condition with constant $B$ if,
     \bea
     \label{Chr-de.cond}
 \sup_{\ub}\sum_{i\leq 5}\sum_{k\leq 3}\delta^{\frac 12+k}||\nab_4^k\nab^i\chih_0||_{L^\infty(S_{0,\ub})}\leq B.
\eea
\end{definition}
It is proved in \cite{Chr:book} that given any $B>0$ and $u_*<1$, there exists $\de$ sufficiently small such that any 
prescribed shear  $\chi_{0}$  satisfying             \eqref{Chr-de.cond}     gives rise to a unique smooth spacetime $\mathcal D(u_*,\de)$. In particular, the final inner sphere $S_{u_*,0}$ has radius 
$$r_* = 1-u_*,$$
which can be arbitrarily small, as long as $\de$ is also chosen to be sufficiently small. This is summarized in the following theorem:

\begin{theorem}[Christodoulou\footnote{Strictly speaking, the original theorem of Christodoulou is stated somewhat differently. The version we use here  is a straightforward reformulation of 
  his  results    in    \cite{Chr:book}  by a simple rescaling.} \cite{Chr:book}]\label{Chr.existence.thm}
For every $B>0$ and $u_*<1$ there exists $\de>0$ sufficiently small such that if the  null shear $\chih_0$   along $H_0$    verifies    Christodoulou's     $\de$-short pulse condition with constant $B$, then the   future  development  of the    corresponding   initial data  contains    $\DD(u_*,\de)$     as a regular  spacetime    in   which,   in particular, the   conditions {\bf MA1}--{\bf MA4} below   are satisfied,      with $\de_0=\de^{1/2}$.
\end{theorem} 

  \begin{enumerate}
 \item[\bf MA1.] $\Om $ is  comparable with its  initial value
\beaa
\Om= 1 +O(\de_0).\\
\eeaa
   
 \item[\bf MA2.]  The  Ricci coefficients $\chi, \om, \eta, \etab, \nab(\log\Om),  \chib, \omb$   verify

 \beaa
 |\chih, \om|&=&O(\de^{-1/2}), \\ 
    |\trch  |&=&O(1), \\ 
      |\eta, \etab, \chibh, \trchb+\frac{2}{r}, \omb, \nab(\log\Om)|&=& O(\delta_0).\\
 \eeaa

 \item[\bf MA3.]  The derivatives of Ricci coefficients satisfy
 \beaa
             |\nab\eta|&=&O(\delta_0 \de^{-1/2}), \\
             |\nab\chibh,  \nab\trchb,  \nab\omegab|&=& O(\delta_0).
   \eeaa
 \item[\bf MA4.] $\trch$ is close to its Minkowskian value on the initial cone $\Hb_0$
 
 \beaa
     |\trch-\frac 2 r|&=&O(\de_0),\quad\mbox{on }\Hb_0. \\ 
  \eeaa

\end{enumerate}

To show that a trapped surface forms  in $\DD(u_*,\de)$, Christodoulou needs in addition a uniform lower bound   on the function  $M_0=M_0[\chih_{0}]$  defined on $S_{0,0}$   as follows,
\bea
M_0(\om)=M_0[\chih_0](\om):= \int_0^{\de} |\chih_0|^2(\ub',\omega) d\ub',    \label{eq:main-thm1}
\eea
where the integral is taken along the null geodesic generators on $H_0$, transversal  to $S_{0,0}$, initiating at 
 $\om\in S_{0,0}$.   More precisely,  he proved
\begin{theorem}[Christodoulou \cite{Chr:book}]\label{Chr.form.trapped.surface.thm}
Assume that  the  initial   null shear $\chih_0$   along $H_0$    verifies both     \newline\noindent  Christodoulou's     $\de$-short pulse condition with constant $B$ and   the    isotropic  condition 
\bea
 \inf_{\om\in  S_{0,0}  }    M_0(\om)\geq M_* >0.\label{main-cond.C}
 \eea
 Then,   for any given $r_*>0$, 
$$r_*<\frac{M_*}{2},\quad u_*=1-r_*,$$
there exists $\de>0$ sufficiently small (depending only on $B$ and $M_*$) such that     surface   $S_{u_*,\de}\subset \DD(u_*,\de)$, with 
$\DD(u_*,\de)$     constructed     by Theorem  \ref{Chr.existence.thm},  is necessarily trapped.

\end{theorem} 
We now state our main result, which replaces the condition of the uniform lower bound \eqref{main-cond.C} by merely the condition that $M_0$ is positive somewhere:
\begin{theorem}[Main theorem]
\label{Main:thm}
Assume that  the initial   null shear $\chih_0$   along $H_0$    verifies both   \newline\noindent    Christodoulou's     $\de$-short pulse condition with constant $B$ and the following anisotropic condition
\bea
\sup_{\om\in  S_{0,0}  } M_0(\om)>0.\label{main-cond}
\eea
 Then,  there exists $u_*>0$ sufficiently close to $1$ and $\de>0$ sufficiently small (depending on $B$ and the function $M_0$) such that the future development   $\DD(u_*,\de)$,
   constructed    by  Theorem  \ref{Chr.existence.thm},
 contains a trapped surface.

\end{theorem}

By continuity of $M_0$, the condition \eqref{main-cond} implies that there exists $\ep$ and $M_*$ such that
  \bea
\inf_{\om\in B_p(\ep)  }    M_0(\om)\geq M_* >0,  \label{eq:main-thm2}
\eea
  where   $B_p(\ep)$  is     a geodesic ball  of radius  $\ep$  around some $p\in S_{0,0}$.
 Our main theorem therefore follows from the more quantitative version below:

\begin{theorem}\label{main.thm.quan}
Assume that  the initial   null shear $\chih_0$   along $H_0$    verifies both    Christodoulou's     $\de$-short pulse condition with constant $B$ and   the    non-isotropic  condition     \eqref{eq:main-thm2}  with constants $\ep$ and
$M_*$.  Then,  for  any given $r_*>0$ verifying
\bea
r_*= c_0M_*\ep^5,\quad u_*=1-r_*,
\eea
(where $c_0$ is a universal constant) there exists $\de>0$ sufficiently small (depending on $B$, $M_*$ and $\ep$) such that the future development   $\DD(u_*,\de)$,
   constructed    by  Theorem  \ref{Chr.existence.thm},
 contains a trapped surface of area at least $\gtrsim M_*^2 \ep^{10}$.    

\end{theorem}

\begin{remark}
In Chapter 2 of \cite{Chr:book}, Christodoulou also constructed a class of initial data satisfying the assumptions of Theorems \ref{Chr.existence.thm} and \ref{Chr.form.trapped.surface.thm}. The construction, which is based on solving ordinary differential equations along the null generators of the initial outgoing null hypersurface, can also be applied to construct initial data which obey the assumptions of our main theorem (Theorem \ref{Main:thm}). In particular, since the initial data at any point $p$ on the initial hypersurface depend only on the prescribed conformal part of the metric along the null generator passing through $p$, we can prescribe initial data which are flat except in a small neighborhood of an outgoing generator on the initial hypersurface.
\end{remark}

\begin{remark}    The  conditions  {\bf MA1}--{\bf MA4}   are     needed
 to implement  part  (2)  of   our main theorem.   Though strictly speaking  Christodoulou's
  theorem implies  {\bf MA1}--{\bf MA4}   with $\de_0=\de^{1/2}$,  we  prefer to    write   them
  in this more general form   with respect  to   a second parameter $\de_0$,    such that  $\de_0$, $\de_0^{-1}\de$
   are sufficiently small.
  This formulation  allows   us   to  adapt  our result to the more general  initial data used  in 
  \cite{K-R:trapped}, \cite{K-R:scarred} and \cite{L-R}.  
  
\end{remark}

\begin{remark}
The original theorem of Christodoulou \cite{Chr:book} also applies    to the case when the initial   area-radius is $r_0$.
In that case,  \eqref{Chr-de.cond} can be replaced by 
$
\sup_{\ub}\sum_{i\leq 5}\sum_{k\leq 3}\delta^{\frac 12+k}r_0^{1+i}||\nab_4^k\nab^i\chih_0||_{L^\infty(S_{0,\ub})}\leq B,
$   
and    the definition of $M$      is similarly modified, i.e.  $M(\omega;r_0)=\int_0^{\delta} r_0^2|\chih_0|^2(\ub',\omega)d\ub'.$
With this  definition, if the initial data satisfy
$\inf_{\om\in  S_{0,0}  }    M_0(\om;r_0)\geq M_* >0,$
the formation of trapped surface theorem (Theorem \ref{Chr.form.trapped.surface.thm}) still holds after choosing $\delta$ sufficiently small depending on $B$ and $M_*$.
By the same rescaling argument   our main theorem (Theorem \ref{Main:thm}) holds in the case where the initial area-radius is $r_0$ if the condition
$\sup_{\om\in  S_{0,0}  }    M_0(\om;r_0) >0$ is verified.
Moreover, in \cite{Chr:book}, Christodoulou proved that one can take $r_0\to \infty$ and construct a spacetime from null infinity. For such spacetimes, our anisotropic condition can be replaced by the condition 
$\sup_{\om\in  S_{0,0}  }    M_0(\om;\infty) >0$
and still guarantee the formation of a trapped surface.  
\end{remark}

\begin{remark}    Note that   while   in  \cite{Chr:book}  the desired          trapped  surface 
  can be found  among the surfaces 
  $S_{u,\ub}$,  consistent with the double null foliation,    this is no longer the case 
  in our theorem. Instead, we will identify a different 2-sphere, embedded in the null hypersurface $\{\ub=\delta\}$, which can be shown to be trapped.      Note also  that     the function   $M_0[\chih](\om)=\int_0^\de |\chih(\ub, \om)|^2 d\ub' $  is invariant\footnote{In other words 
$M_0[\chih](\om)=       M_0[\chih'](\om) =\int_{f(0,\om)}^{f(\delta,\om)} |\chih'(\ub',\om)|^2 d\ub'$    where  $\chih'_{ab}=\frac 12 g(D_{e_a} \frac{\partial}{\partial \ub'}, e_b)$       for a new foliation  $\ub'=f(\ub,\om)$,  with $\frac{\partial f}{\partial \ub}> 0$.    } with respect to  a change of foliation     along  the initial hypersurface $H_0$.  It is thus  impossible to change the foliation on the initial hypersurface $H_0$ so that the isotropic condition in Theorem \ref{Chr.form.trapped.surface.thm} holds.
  \end{remark}

We  show that a trapped surface exists in $\DD(u_*,\de)$ by finding an embedded trapped 2-sphere in the incoming null hypersurface $\Hb_\de$. Notice that by {\bf MA2}, $\trchb<0$ on $\Hb_{\de}$. Therefore, it suffices to find a 2-sphere such that the outgoing null expansion is also pointwise negative. We will achieve this in two steps. In Section \ref{sec.reduction} we make use of 
  the conditions {\bf MA1}--{\bf MA4} to 
  reduce the 
 problem of  existence of a trapped  surface to 
 that of finding    appropriate  solutions to an elliptic  inequality, see  \eqref{maineqR},    on 
 $S_{0,0}$.
  In Section 
\ref{sec.solution}, we  prove that under the assumption \eqref{eq:main-thm2} of the main theorem, a desired solution to the elliptic inequality exists.

\section{Reduction to an elliptic inequality on  the initial  sphere $S_{0,0}$}\label{sec.reduction}
In   this section, we  show that under the assumptions {\bf MA1}--{\bf MA4}, the existence of a trapped surface can be reduced to constructing a solution to an elliptic inequality \eqref{maineqR}.  The main result of the section is stated
 in Theorem \ref{thm5}.

We first make a remark  concerning    the global parametrization of points 
 in $ \DD(u_*,\delta)$ by $u$, $\ub$ and  coordinates $\om$  on $S_{0,0}$.
 
Proceeding as in \cite{Chr:book}  we associate  to 
  each coordinate patch on $S_{0,0}$, a system of transported coordinates
defined   by
\bea
\Ls_{\Omega e_4} \theta^a=0, \quad \mbox{on }H_0, \label{coord1}
\eea
and
\bea
\Ls_{\Omega e_3} \theta^a =0,\quad \mbox{in } \DD(u_*,\delta), \label{coord2}
\eea
where $\Ls$ is the restriction of the Lie derivative to $T S_{u,\ub}$ (see \cite{Chr:book}, chapter 1).
This provides an identification of each point in the spacetime $\mathcal D(u_*, \de)$
with a point in the initial sphere $S_{0,0}$ by the value of the coordinate functions.

It follows that any point in $\mathcal D(u_*, \de)$ can also  be uniquely specified by   the coordinates $(\ub, u, \omega)$,
where $\omega\in S_{0,0}$.

We can now state the main result of this section:

\begin{theorem}
\label{thm5}
Assume that the spacetime $\DD(u_*,\delta)$ satisfies {\bf MA1}--{\bf MA4}. Let $M_0$ be the function on the initial sphere $S_{0,0}$ defined by \eqref{eq:main-thm1}, i.e.,
$$M_0(\omega) := \int_0^{\delta} |\chih|^2(u=0,\ub',\omega) d\ub'.$$
Assume   $R$ is a smooth function on $S_{0,0}$  satisfying $r_*+C\de_0 < R< 1$ as well as 
 the  elliptic inequality\footnote{Here $\lap_0$ and $\nab_0$ are defined with respect to the connection on the initial sphere $S_{0,0}$.} on $S_{0,0}$
\bea\label{maineqR}
-\lap_0 R+ R^{-1}|\nab_0 R|^2 + R <2^{-1}M_0-C\de_0,
\eea
with  $C>0$  a constant  depending  only on $\displaystyle\sum_{i\leq 2} ||\nab_0^i R||_{L^\infty(S_{0,0})}$.\\
Then, for $\delta\delta_0^{-1}$, $\delta_0$ sufficiently small,
the $2$-sphere defined by $\{(\ub,u,\omega) : \ub=\delta, 1-u = R(\om)\}$ is a trapped surface.
\end{theorem}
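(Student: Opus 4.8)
The strategy is to compute the outgoing null expansion of the candidate 2-sphere $S_R := \{\ub = \delta,\ r = R(\omega)\}$ directly, expressing it in terms of geometric quantities on $\DD(u_*,\delta)$, and then to transport all relevant quantities back to the initial sphere $S_{0,0}$ using the transport coordinates $(\ub, r, \omega)$ and the conditions {\bf MA1}--{\bf MA4}. Since {\bf MA2} already gives $\trchb < 0$ on $\Hb_\delta$, it suffices to produce a sphere whose outgoing expansion $\th^+$ is pointwise negative. The sphere $S_R$ is a graph over $S_{0,\delta}$ inside the incoming cone $\Hb_\delta$; equivalently, using the parametrization by $(r,\omega)$ along $\Hb_\delta$, it is the graph $u = u(R(\omega),\omega)$. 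I would first write down the null normal to $S_R$ inside $\Hb_\delta$ that is conjugate to $e_3$ (a deformation of $e_4$ by a tangential vector determined by $\nab R$), and then compute $\th^+ = \mathrm{tr}\,\chi'$ for this deformed null frame. The key identity is the standard formula for how the null second fundamental form transforms under a change of section of a null hypersurface: schematically,
\[
\th^+ \;=\; \Omega^{-1}\Big( \mathrm{tr}\chi \big(\tfrac{dr}{d\ub}\big) \text{-type terms} \Big) \;-\; \frac{2}{r}\,\frac{dr}{du}\Big|_{\Hb_\delta} \; \big(\text{graph term}\big) \;+\; \lap_{S} (\log\text{-type of } R) \;-\; |\nab(\cdot)|^2 \;+\; (\text{l.o.t.}),
\]
and the main computational task is to identify, using \eqref{deriv-r}, \eqref{eq:dr-du}, the commutation formulae of Lemma \ref{le:comm}, and the coordinate transport equations \eqref{coord1}--\eqref{coord2}, exactly which terms survive at leading order.

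The central point is that the dominant contribution to $\th^+$ comes from the $\int|\chih|^2$ amplification along the outgoing generators — precisely Christodoulou's mechanism — which, when the data is transported along $\Hb_\delta$ back to $S_{0,0}$, produces the term $-\frac{1}{2}M_0(\omega)/R(\omega)$ (up to the $O(\delta_0)$ errors controlled by {\bf MA1}--{\bf MA4}). The remaining geometric terms assemble, via the transport coordinate computation, into the quasilinear elliptic operator
\[
\frac{1}{R}\Big( -\lap R + R^{-1}|\nab R|^2 + R\Big),
\]
where $\lap$ and $\nab$ are now the Laplacian and gradient on the \emph{initial} sphere $S_{0,0}$ — the reduction to $S_{0,0}$ being exactly what the transport coordinates buy us. Collecting everything, one obtains
\[
r\,\Omega\,\th^+ \;=\; -\lap R + R^{-1}|\nab R|^2 + R - \tfrac{1}{2}M_0(\omega) + O(\delta_0),
\]
up to a positive multiplicative factor, where the $O(\delta_0)$ error is bounded by a constant $C$ depending only on $\sum_{i\le 2}\|\nab^i R\|_{L^\infty(S_{0,0})}$ (the dependence enters through the error terms multiplying $\nab R$, $\nab^2 R$ in the transported metric and connection). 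Hence the elliptic inequality \eqref{maineqR} forces $\th^+ < 0$ pointwise, and together with $\trchb < 0$ this makes $S_R$ trapped.

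I would carry out the steps in the following order. First, set up the embedding of $S_R$ in $\Hb_\delta$ and write its two null normals; here I must be careful that $S_R$ is an honest 2-sphere, which uses $r_0 - u_* + C\delta_0 < R < r_0$ together with the monotonicity of $r$ in $u$ along $\Hb_\delta$ from \eqref{eq:dr-du}. Second, derive the transformation law for $\th^+$ under the graph deformation, keeping track of the tangential-shift contribution $\lap$ and $|\nab|^2$ terms exactly. Third, propagate $\mathrm{tr}\chi$ and $\chih$ — and in particular the integral $\int_0^\delta |\chih|^2$ — from $H_0$ to the relevant points via the outgoing-then-incoming transport used in defining the coordinates $(\ub,r,\omega)$; this is where Christodoulou's mechanism enters and where {\bf MA2}, {\bf MA3}, {\bf MA4} are used to control the evolution of $\trchb$, $\chibh$, $\eta$, $\etab$, $\omb$ along incoming cones. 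Fourth, change variables from $(u,\omega)$-derivatives along $\Hb_\delta$ to $(r,\omega)$-derivatives and then to intrinsic derivatives on $S_{0,0}$, absorbing all discrepancies between the induced metric on $S_R$ and the round metric on $S_{0,0}$ into the $C\delta_0$ error. I expect the \textbf{main obstacle} to be Step 3 combined with Step 4: controlling the accumulated error in transporting $|\chih|^2$ and the connection coefficients along the incoming null hypersurface over a parameter range of size $r_0 - u_* \ll M_*\ep^2/r_0^2$, and showing that after the double change of variables every error term is genuinely $O(\delta_0)$ with the stated dependence on $\|\nab^{\le 2} R\|_{L^\infty}$ — in particular that no term of size $O(1)$ (rather than $O(\delta_0)$) is generated by the graph deformation when $|\nab R|$ is merely bounded, which is why the nonlinear term $R^{-1}|\nab R|^2$ must appear with precisely the right coefficient.
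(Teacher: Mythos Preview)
Your plan is correct and follows essentially the same approach as the paper: Christodoulou's amplification argument (Proposition~\ref{prop:Christ}) gives $\trch$ on $\Hb_\delta$ in terms of $M_0$, a null-frame transformation formula (Proposition~\ref{prop-summary-G}) computes the deformed expansion $\trch'$ on the graph, transport equations along $e_3$ reduce the deformation terms to data on $S_{0,0}$ (Proposition~\ref{prop: def}), and a final change of variables produces exactly the elliptic operator $-\lap R + R^{-1}|\nab R|^2 + R$. The only organizational difference is that the paper parametrizes the deformation by an auxiliary function $f_0$ via a foliation $\nab_u v = e^{f_0}$, $\nab_u f = 0$, and converts to $R = r_0 - e^{-f_0} + O(\delta_0)$ only at the very end; this makes the incoming transport particularly clean since $\nab_3 f \equiv 0$, but the content is the same as your direct graph computation.
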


We  prove Theorem \ref{thm5} in this section and leave for the  next section 
the task to show that a smooth solution $R$ to \eqref{maineqR} exists,  
provided that $M_0$ satisfies the assumptions of  our  main theorem.

The proof of Theorem \ref{thm5} will be achieved in two steps. In Section \ref{Chr-heuristic}, we will carry out Christodoulou's argument in \cite{Chr:book} to estimate $\trch$ with respect to the original foliation on $\Hb_{\de}$. In Section \ref{secmaintrans}, we will then deform the foliation on $\Hb_{\de}$ in such a way that the 2-sphere $\{(\ub,u,\omega) : \ub=\delta, 1-u = R(\om)\}$ is a level surface adapted to the new foliation. We then compute the desired outgoing expansion $\trch'$ on the 2-sphere $\{(\ub,u,\omega) : \ub=\delta, 1-u = R(\om)\}$.

We note that the derivation in Section \ref{secmaintrans} can be  simplified   by taking into account    the     specific properties  of      the spacetime
 constructed  in Theorem \ref{Chr.existence.thm}. We will include the simplified argument in Section \ref{sec.simp.der}. Nevertheless,  
  we prefer to proceed  below with the  general derivation in view of possible   applications
    to a more general setting.

\subsection{Christodoulou's argument}\label{Chr-heuristic}

In \cite{Chr:book}, it was shown that under the assumptions {\bf MA1}--{\bf MA4},  the expansion $\trch$ on each of the spheres  $S_{u, \de}$  on $\underline{H}_{\delta}$ can be computed up to a small error depending on $\delta$. In the context of \cite{Chr:book}, where a uniform lower bound on $M_0$ is assumed, this is sufficient to conclude the existence of a trapped surface $S$ of 
the form $S=S_{u,\de}$.  In the case  of our weaker condition  \eqref{eq:main-thm2}, his argument only shows that $\trch$ becomes sufficiently negative in part of the sphere $S_{u,\delta}$.  To obtain a trapped surface we need    to combine that fact     with a  new deformation argument   of   the   foliation on $\Hb_{\delta}$.  

Christodoulou's   argument for the formation of trapped surfaces in  \cite{Chr:book}  rests on the equations\footnote{ The operator $\widehat{\otimes}$   referes to    the traceless part of the symmetrized  tensor product.        We refer the readers to \cite{K-R:trapped} for the precise  notations used in this paper.},
 \beaa
\nab_4 \trch&=&-|\chih|^2   -\frac 12 (\trch)^2 -2\om \trch \\
\nab_3\chih+\frac 1 2 \trchb \chih&=&\nab\widehat{\otimes} \eta+2\omb \chih-\frac 12 \trch \chibh +\eta\widehat{\otimes} \eta
\eeaa
In view of our Ricci coefficients  assumptions we   can rewrite,
\beaa
\nab_4 \trch&=&-|\chih|^2 +O(\de^{-1/2}) \\
\nab_3\chih+\frac 1 2 \trchb \chih&=&    O(\de_0 \de^{-1/2})            
\eeaa
Multiplying  the second equation by $\chih$, 
\beaa
\nab_3  |\chih|^2+\trchb|\chih|^2&=&O(\de_0 \de^{-1})
\eeaa
 Using also our assumptions for  $u, \ub, \Om$    we deduce\footnote{Here, $\frac{\partial}{\partial\ub}$ and $\frac{\partial}{\partial u}$ refers to the coordinate vector fields defined with respect to a coordinate system $(u,\ub,\th^1,\th^2)$, where $\th^1$ and $\th^2$ satisfy the conditions \eqref{coord1} and \eqref{coord2}. Notice that this is different from the vector fields $\Omega e_4$ and $\Omega e_3$ as $\Omega e_4$ would have an "angular component" in the coordinate system $(u,\ub,\th^1,\th^2)$. In particular, $\frac{\partial}{\partial\ub}$ is not parallel to the null generators. Nevertheless, as a consequence of the assumptions {\bf MA1}--{\bf MA4}, the difference between $\frac{\partial}{\partial\ub}$ and $\Omega e_4$ is a vector field on $TS_{u,\ub}$ with a small norm and will be collected in the error terms in the equations.},
\bea
\frac{\partial}{\partial\ub}\trch&= &-|\chih|^2+O(\de^{-1/2})  \label{eq:intr.nab4trchi}\\
\frac{\partial}{\partial u}|\chih|^2+\trchb|\chih|^2&=&O(\delta_0\de^{-1})    \label{eq:intr.nab3chi}
\eea
Integrating \eqref{eq:intr.nab4trchi} we    obtain,
\bea
\trch(u,\ub)&= &\frac{2}{r(u,0)}-\int_0^{\ub}   |\chih|(u,\ub')^2 d\ub'   +O(\de_0) \label{eq.ineq.intrch}
\eea

In view of our assumptions for $\trchb$
and $\frac{dr}{du} $, \eqref{eq:intr.nab3chi} implies
\beaa
\frac{d}{du} (r^2 |\chih|^2)&=&r^2\frac {d}{du}|\chih|^2+2 r \frac {dr}{du} |\chih|^2=r^2\big[-\trchb |\chih|^2 +O(\de_0\de^{-1})\big]   +2r\big[  -1+O(r\de_0)  \big] |\chih|^2  \\
&=&O(\de_0\de^{-1}).
\eeaa
Therefore,
\beaa
r^2 |\chih|^2(u,\ub)=r^2(0,\ub)  |\chih|^2(0 ,\ub)+O(\de_0\de^{-1})
\eeaa
Let $\chih_0$ denote the initial data for $\chih$:
\bea
\chih_0(\ub)=\chih(0,\ub).
\eea

We  deduce, 
\beaa
|\chih|^2(u,\ub)&=&\frac{ r^2(0,\ub)}{r^2(u,\ub)}   |\chih_0|^2(\ub)+O(\de_0\de^{-1}  )
\eeaa
Since $r(u,\ub)=r(u,0)+O(\de)$,
\beaa
|\chih|^2(u,\ub)=\frac{ 1}{r^2(u,0)}   |\chih_0|^2(\ub)+O(\de_0\de^{-1}).
 \eeaa
Thus, returning to  \eqref{eq.ineq.intrch}, and recalling that
$$M_0(\omega)=\int_0^{\de}  |\chih_0|^2(\ub',\omega) d\ub',$$
we deduce the following:  
\begin{proposition}\label{prop:Christ}
Under the assumptions  {\bf MA1}--{\bf MA4} we have, for $\de_0$, $\de_0^{-1}\de$ sufficiently small,
\bea
\label{form-prop:christ}
\trch(u,\ub=\de,\om)=\frac{2}{r(u, 0)} -
\frac{ 1}{r^2(u,0)}M_0(\om)+ O(\de_0)
\eea

\end{proposition}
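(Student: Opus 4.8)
The plan is to assemble the proposition essentially by collecting the computations already carried out in the heuristic discussion above and upgrading the heuristic ``$\frac{d}{d\ub}$'' and ``$\frac{d}{du}$'' manipulations into rigorous transport estimates, keeping careful track of the error sizes allowed by {\bf MA1}--{\bf MA4}. First I would start from the two structure equations for $\trch$ and $\chih$ (the Raychaudhuri equation along $e_4$ and the $\nab_3$-equation for $\chih$) and, using {\bf MA2}--{\bf MA3} to bound $\om$, $\omb$, $\eta$, $\chibh$, $\trch$, rewrite them as $\nab_4\trch = -|\chih|^2 + O(\de^{-1/2})$ and $\nab_3\chih + \tfrac12\trchb\chih = O(\de_0\de^{-1/2})$. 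Contracting the second with $\chih$ and using $|\chih| = O(\de^{-1/2})$ from {\bf MA2} gives $\nab_4|\chih|^2 + \trchb|\chih|^2 = O(\de_0\de^{-1})$. Then, invoking the integral formula \eqref{form:transp-f-u} together with {\bf MA1} ($\Om = 1 + O(\de_0)$) and {\bf MA2} to convert $\nab_4$, $\nab_3$ into $\frac{d}{d\ub}$, $\frac{d}{du}$ along the transported coordinates $\om$, I obtain \eqref{eq:intr.nab4trchi} and \eqref{eq:intr.nab3chi}.

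Next I would integrate \eqref{eq:intr.nab4trchi} in $\ub$ from $0$ to the given $\ub$, using {\bf MA4} (i.e. $\trch = \frac{2}{r} + O(\de_0)$ on $\Hb_0$, hence $\trch(u,0) = \frac{2}{r(u,0)} + O(\de_0)$) as initial value, and noting that the error term $O(\de^{-1/2})$ integrated over an interval of length $\le\de$ contributes only $O(\de^{1/2}) = O(\de_0)$ after we identify $\de_0$ with $\de^{1/2}$ — or more generally is $O(\de\de_0^{-1})\cdot O(\de_0)$, absorbable once $\de\de_0^{-1}$ is small. This yields \eqref{eq.ineq.intrch}. For the shear, I would run the standard $r^2$-weighting trick: using $\frac{dr}{du} = -1 + O(r\de_0)$ from \eqref{eq:dr-du} and $\trchb = -\frac2r + O(\de_0)$ from {\bf MA2}, compute $\frac{d}{du}(r^2|\chih|^2) = O(\de_0\de^{-1})$, integrate in $u$ (an interval of length $\le u_* \le r_0$), and conclude $r^2|\chih|^2(u,\ub) = r^2(0,\ub)|\chih_0|^2(\ub) + O(\de_0\de^{-1})$. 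Dividing by $r^2(u,\ub)$, which is bounded below, and using $r(u,\ub) = r(u,0) + O(\de)$ (again from \eqref{eq:dr-du}, since $\left|\frac{dr}{d\ub}\right| = O(r)$ over an interval of length $\le\de$) together with $r(0,\ub) = r_0 + O(\de)$, gives $|\chih|^2(u,\ub) = \frac{r_0^2}{r^2(u,0)}|\chih_0|^2(\ub) + O(\de_0\de^{-1})$.

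Finally I substitute this expression for $|\chih|^2(u,\ub')$ into the integral term of \eqref{eq.ineq.intrch}, specialize to $\ub = \de$, and recognize $\int_0^{\de}|\chih_0|^2(\ub',\om)\,d\ub' = M_0(\om)$; the accumulated error from integrating $O(\de_0\de^{-1})$ over $[0,\de]$ is $O(\de_0)$, which merges with the $O(\de_0)$ already present, producing \eqref{form-prop:christ}. I would expect the main obstacle to be bookkeeping the error exponents consistently: one must verify at each integration that the factor of $\de$ (or $\de_0$) gained from integrating over a short interval genuinely beats the negative power of $\de$ (or $\de_0^{-1}\de$) in the error, which is exactly why the statement requires $\de_0$ and $\de_0^{-1}\de$ both sufficiently small rather than just $\de$ small. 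A secondary technical point worth stating carefully is the legitimacy of differentiating under the integral sign / using the transport formula \eqref{form:transp-f-u} along the $e_3$-transported coordinates, so that ``$\frac{d}{du}$ at fixed $\om$'' means differentiation along the integral curves of $\Om e_3$ as in \eqref{coord2}; everything else is a routine Grönwall-type argument along one-dimensional null geodesics.
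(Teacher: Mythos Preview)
Your proposal is correct and follows essentially the same argument as the paper: you use the same two transport equations, the same $r^2$-weighting trick for $|\chih|^2$ along $e_3$, and the same integration in $\ub$ using {\bf MA4} for the initial value of $\trch$, in the same order. The only slip is that after contracting the $\nab_3\chih$ equation with $\chih$ you wrote $\nab_4|\chih|^2$ where it must be $\nab_3|\chih|^2$ (since you then correctly convert it to $\frac{d}{du}$ and cite \eqref{eq:intr.nab3chi}, this is clearly just a typo and does not affect the argument).
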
 
Since $r(u,0)=1-u+O(\de_0)$, this implies
\begin{corollary}
For $\de_0$, $\de_0^{-1}\de$ small, the necessary and sufficient condition to have $\trch< 0$
everywhere on the sphere $S_{u,\delta}$ is that
\bea
{2(1-u)}<  M_0(\om)- O(\de_0) \label{Cond.Chr}
\eea
holds uniformly for every $\om\in S_{0,0}$.
\end{corollary}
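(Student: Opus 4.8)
The plan is to read the Corollary straight off the formula \eqref{form-prop:christ} of Proposition \ref{prop:Christ}, once the coefficient $r(u,0)$ has been replaced by its leading behaviour. So the first step is to record that, along the incoming cone $\Hb_0=\{\ub=0\}$, the second relation in \eqref{deriv-r} together with {\bf MA1} and {\bf MA2} (in particular $\Om=1+O(\de_0)$ and $\trchb+\frac{2}{r}=O(\de_0)$) gives $\frac{dr}{du}=-1+O(r\de_0)$, which is exactly \eqref{eq:dr-du}. Integrating this from $u=0$, where $r=r_0$, and using that $\de_0$ is chosen small relative to $r_0-u_*$ so that $r(u,0)$ stays bounded away from $0$, a Gronwall-type estimate yields
\[
r(u,0)=r_0-u+O(\de_0),\qquad 0\le u\le u_*.
\]
Treating $r_0-u$ as a fixed positive quantity, this gives $\frac{2}{r(u,0)}=\frac{2}{r_0-u}+O(\de_0)$ and $\frac{r_0^2}{r^2(u,0)}=\frac{r_0^2}{(r_0-u)^2}+O(\de_0)$. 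I would also note that Christodoulou's $\de$-short pulse condition \eqref{Chr-de.cond} forces $|\chih_0|^2\le C^2\de^{-1}$ on $H_0^{(0,\de)}$, so $M_0(\om)=\int_0^\de|\chih_0|^2\,d\ub'\le C^2$ is bounded uniformly in $\om$; hence the product of the $O(\de_0)$ error in the coefficient $r_0^2/r^2(u,0)$ with $M_0(\om)$ is again $O(\de_0)$.

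Substituting all of this into \eqref{form-prop:christ} gives, for $\de_0$ and $\de_0^{-1}\de$ sufficiently small,
\[
\trch(u,\de,\om)=\frac{2}{r_0-u}-\frac{r_0^2}{(r_0-u)^2}\,M_0(\om)+O(\de_0).
\]
Consequently $\trch(u,\de,\om)<0$ is equivalent to $\frac{r_0^2}{(r_0-u)^2}M_0(\om)>\frac{2}{r_0-u}+O(\de_0)$, and multiplying through by the fixed positive constant $(r_0-u)^2/r_0^2$ this becomes $M_0(\om)>\frac{2(r_0-u)}{r_0^2}+O(\de_0)$, i.e. \eqref{Cond.Chr} (the sign of the $O(\de_0)$ term being immaterial, as it is absorbed into the $M_0(\om)-O(\de_0)$ on the right of \eqref{Cond.Chr}). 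Since this argument is pointwise and the equivalence is uniform in $\om$, demanding \eqref{Cond.Chr} for every $\om\in S_{0,0}$ is both \emph{necessary} and \emph{sufficient} for $\trch<0$ to hold at every point of $S_{u,\de}$, which is the assertion of the Corollary.

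There is no substantial obstacle here; the only care needed is in the bookkeeping of the $O(\de_0)$ terms. Their implicit constants are allowed to depend on $r_0$ and $u_*$ (equivalently on a lower bound for $r_0-u$), which is precisely what legitimizes dividing through by $r_0-u$ and clearing the coefficient $r_0^2/(r_0-u)^2$; and one must invoke the short pulse bound \eqref{Chr-de.cond} to keep $M_0$ bounded, so that multiplying an $O(\de_0)$ coefficient error by $M_0(\om)$ does not degrade the estimate. With these provisos the Corollary follows directly from Proposition \ref{prop:Christ}.
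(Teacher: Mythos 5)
Your argument is correct and follows the same route the paper takes: the paper simply records $r(u,0)=r_0-u+O(\de_0)$ and reads the corollary off Proposition \ref{prop:Christ}, which is exactly what you do. The additional bookkeeping you supply (integrating \eqref{eq:dr-du} to get the asymptotics of $r(u,0)$, and invoking the short-pulse bound to keep $M_0$ uniformly bounded so the $O(\de_0)$ coefficient error is not amplified) is a sensible filling-in of details the paper leaves implicit.
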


Under the assumptions of our main theorem, we can only hope that the outgoing null expension 
$\trch$ adapted to the foliation $(u,\ub)$ becomes negative 
in the part where $M_0$ is positive. Thus to prove our main theorem, we need to combine this argument with the new deformation mechanism 
which leads to the formation of a trapped surface that is no longer adapted to the double null foliation
$(u,\ub)$. Instead, as stated in Theorem \ref{thm5}, the trapped surface will be a topological 2-sphere 
embedded in the incoming null hypersurface $\{\ub = \de\}$ defined by $\{(\ub,u,\omega): \ub=\de, 1-u=R(\omega)\}$.

 \subsection{Main transformation formula}\label{secmaintrans}      
 According to the statement of Theorem \ref{thm5}, $\{(\ub,u,\omega): \ub=\de, 1-u=R(\omega)\}$ will correspond to 
 a trapped surface provided that $R$ satisfies \eqref{maineqR}. To verify that, we need to compute
 its null expansion, which differs from the null expansion $\trch$ relative to
 the double null foliation $(u,\ub)$ restricted to $\{(\ub,u,\omega): \ub=\de, 1-u=R(\omega)\}$.
 To compute the correct null expansion $\trch'$,  we introduce  the new null  frame adapted to this set,
 \bea
 \label{newframe}
e_3'=e_3,\qquad   e_a'=e_a- \Om   e_a(R)e_3, \qquad e_4'=e_4-2\Om e^a(R) e_a+    \Om^{2} |\nab R|^2 e_3
 \eea
 Recall that by definition $e_3(u)=\Om^{-1}$. Thus we have, $e_a'(u+R-1)=e_a(R)- e_a(R) \Om e_3(u)=   0$.   Also,
 since $e_3$ is orthogonal to any vector tangent to $\Hb$ we 
easily check that
\beaa
g(e_a', e_b')=g(e_a,e_b)=\de_{ab},\quad g(e_4', e_a')=
g(e_4', e_4')=0,\quad 
g(e_3', e_4')=-2.
\eeaa
We prove the following

\begin{proposition} The trace of the null second fundamental form $\chi'$,    relative to the  new frame  \eqref{newframe},  is given by 
 \bea
\trch'&=&\trch -2\Om \Delta R -4\Om\eta\cdot\nab R-4\Om^2\chibh_{b c} \nab^b R\nab^c R -\Om^2\trchb|\nab R|^2-8\Om^2\omb |\nab R|^2.\label{pr-trch}
 \eea
\end{proposition}

\begin{proof}
Let $F_a=\Om(\nab_a R)$. We can then write $e_4'=e_4-2F+|F|^2 e_3$ with  $F=F^c e_c$
 and  $e_b'=e_b-F_b e_3$.
We have,
\beaa
\chi'(e_a', e_b')&:=&g(D_{a'} e_4', e_b')=g(D_a e_4', e_b')- F_a g(D_3 e_4', e_b')
\eeaa
The first term is given by
\beaa
g(D_a e_4', e_b')&=&g\big(D_a(e_4-2F +|F|^2  e_3)\,,  \,e_b- F_b e_3\big)\\
&=&\chi(e_a, e_b)-2F_b \ze_a-2\nab_a F_b +2 F_b g(D_a F, e_3) +|F|^2g(D_a e_3,  e_b-F_b e_3)\\
&=&\chi_{ab}-2\ze_a F_b-2\nab_ a F_b-2 F_b\,  \chib(F, e_a)+|F|^2\chib_{ab}\\
&=&\chi_{ab}-2\ze_a F_b-2\nab_ a F_b-2 F_b\,  F^c \, \chib_{ac}+|F|^2\chib_{ab}
\eeaa
Also,
\beaa
g(D_3 e_4', e_b')&=& g\big(D_3(e_4-2F+|F|^2  e_3)\,,\,  e_b- F_b e_3\big)\\
&=&g(D_3 e_4, e_b)-F_b g(D_3e_4, e_3)-2\nab_3 F_b\\
&=&2\eta_b+4F_b \omb -2\nab_3 F_b
\eeaa
Hence,
\beaa
\chi'_{ab}&=&\chi_{ab}-2\ze_b F_a-2\nab_ a F_b-2 F_b F^c \chib_{ac}+|F|^2\chib_{ab}- F_a \big(2\eta_b+4F_b \omb -2\nab_3 F_b)\\
&=&\chi_{ab}-2\nab_ a F_b+2 F_a \nab_3 F_b-2\ze_b F_a -2 F_a\eta_b+ |F|^2 \chib_{ab}-2 F_b F^c \chib_{ac} -4 \omb F_a F_b
\eeaa
By symmetry in $a, b$ we deduce the formula,
\bea
\chi'_{ab}=\chi_{ab}&- &(\nab_ a F_b+\nab_b F_a) + \nab_3(F_a F_b)  -(\ze_b  +\eta_b)F_a-(\ze_a+\eta_a )F_b\\
&+&  |F|^2 \chib_{ab}-   F_b F^c \chib_{ac}-    F_a F^c \chib_{bc} -4 \omb F_a F_b\nn
\eea
 and, taking the trace,
\beaa
\trch'&=&\trch -2\div F+\nab_3|F|^2-2(\eta+\ze)\c F+(|F|^2 \trchb -2\chib_{bc} F^b F^c)-4\omb |F|^2\\
&=&\trch -2\div F+\nab_3|F|^2-2(\eta+\ze)\c F -2\chibh_{bc} F^b F^c-4\omb |F|^2.
\eeaa
We next calculate  $\nab_3|F|^2$ using  
$$\nab_3 R=0$$
and the commutation formula
 \beaa
 [\nab_3, \nab] h =(\nab\log \Om) \nab_3 h-\chib\c\nab h.
\eeaa

Since  $F=\Om \nab R$  we deduce,
\beaa
\nab_3 F_a&=& \nab_3( \Om  \nab R)=
\Om  (\nab_3 \nab R)+(\nab_3\Om) \nab R\\
&=&\Om  \nab  \nab_3  R+(\nab\log\Om)\Om \nab_3 R- \Om \chib\c\nab R- 2\omb \Om \nab R\\
&=&- \Om \chib\c\nab R- 2\omb \Om \nab R.
\eeaa
from which we derive, 
\beaa
\nab_3|F|^2&=&-\Om^2 \trchb |\nab R|^2 -2\Om^2\chibh_{b c} \nab_b R \nab_c R  -4\Om^2\omb |\nab R|^2.
\eeaa
Therefore,
\beaa
\trch'&=&\trch -2\div F-2(\eta+\ze)\c F -2\chibh_{bc} F^b F^c-4\omb |F|^2\\
&&-\Om^2 \trchb |\nab R|^2 -2\Om^2\chibh_{b c} \nab_b R \nab_c R  -4\Om^2\omb |\nab R|^2\\
&=&\trch -2\Om \Delta R -4\Om\eta\cdot\nab R-4\Om^2\chibh_{b c} \nab_b R\nab_c R -\Om^2\trchb|\nab R|^2-8\Om^2\omb |\nab R|^2
\eeaa
as desired.

\end{proof}
We combine this with the main assumptions {\bf MA1}--{\bf MA4} to derive the following proposition:
\begin{proposition}\label{prop-summary-G}
The trace of the null second fundamental form $\chi'$,    relative to the  new frame  \eqref{newframe}, evaluated at the set $\{(\ub,u,\om):\ub=\de, 1-u=R(\om)\}$  is given by 
\bea
\trch'(u=1-R(\om),\ub=\de,\om)&=&\trch(u=1-R(\om),\ub=\de,\om) -\frac{2\Delta_0 R}{R^2} +\frac{2|\nab_0 R|^2}{R^3}+O_R(\de_0),\label{deform-trch}
 \eea
where $O_R(\de_0)$ denotes a term bounded by $C\de_0$, where $C$ depends only on the $L^\infty$ norm of $R$, $\nab_0 R$ and $\nab_0^2 R$, and $\nab_0$, $\Delta_0$, as before, are defined with respect to the connection on the initial sphere $S_{0,0}$.
\end{proposition}
\begin{proof}
By {\bf MA1}--{\bf MA4}, $\Om-1$, $\eta$, $\chibh$, $\trchb+\frac 2r$ and $\omb$ are small in terms of $\de_0$. Thus
$$\trch'(u=1-R(\om),\ub=\de,\om) = \trch(u=1-R(\om),\ub=\de,\om) -2 {\Delta R} +\frac{2|\nab R|^2}{R}+O_R(\de_0).$$
To achieve the proposition, we need to compare\footnote{Recall that $\nab R$ is defined with respect to the connection coefficients of the spheres adapted to the $(u,\ub)$ foliation at the set $\{(\ub,u,\omega): \ub=\de, 1-u=R(\omega)\}$; while $\nab_0 R$ is defined with respect to the connection coefficients on the initial sphere $S_{0,0}$.} $\nab R$ with $\nab_0 R$. To this end, we consider the equation 
$$\nab_3 R=0 $$
and commute with angular derivatives. The commutation formulae in Lemma \ref{le:comm} and {\bf MA1}--{\bf MA4} imply that
$$\nab_3 \nab R+\frac 12 \trchb \nab R = O_R(\de_0),$$
and 
$$\nab_3 \nab^2 R+ \trchb \nab^2 R = O_R(\de_0).$$
This implies, via the condition $|\trchb+\frac 2r|= O(\de_0)$ in {\bf MA2}, that
$$\nab_3 (r\nab R) = O_R(\de_0),$$
and 
$$\nab_3 (r^2\nab^2 R) = O_R(\de_0).$$
Therefore,
$$|\nab R|^2=\frac{|\nab_0 R|^2}{R^2}+O_R(\de_0),\quad \Delta R=\frac{\Delta_0 R}{R^2}+O_R(\de_0).$$
The conclusion thus follows.
\end{proof}

 We now combine the results of Propositions \ref{prop:Christ}
 and \ref{prop-summary-G}.  According to 
 Proposition \ref{prop:Christ}, $\trch$ evaluated at $\{(\ub,u,\omega): \ub=\de, 1-u=R(\omega)\}$ is given by
 \beaa
\trch(u=1-R(\om),\ub=\de,\om)=\frac{2}{R(u,0)} -
\frac{ 1}{R^2(u,0)}\int_0^{\de}  |\chih_0|^2(\ub',\omega) d\ub'+ O(\de_0).
\eeaa
Thus,  inserting in   \eqref{deform-trch}, we have
\beaa
\trch'(u=1-R(\om),\ub=\de,\om)&\le &\frac{2}{R}   - \frac{2\Delta_0 R}{R^2} +\frac{2|\nab_0 R|^2}{R^3}  -
\frac{ 1}{R^2}  M_0        -O_R(\de_0) .
\eeaa
where 
\beaa
M_0=   \int_0^{\de}  |\chih_0|^2(\ub') d\ub'.
\eeaa
 
This concludes the proof of Theorem \ref{thm5}.

\subsection{An alternative  derivation }   \label{sec.simp.der}
In the context of  our specific  situation  we outline\footnote{We thank Demetrios Christodoulou for pointing out a related idea.} here a slightly easier argument leading to the formula for $\trch'$ on $\Hb_{\de}$ in terms of $\trch$ on $\Hb_{\de}$ and the function $R$ in the limit $\de\to 0$. Note however that the derivation below depends on further properties of the Ricci coefficients, in addition to {\bf MA1}--{\bf MA4}, proved in \cite{Chr:book}. This provides an alternative proof of Theorem \ref{thm5}.

We recall from \cite{Chr:book} that in the limit $\de\to 0$, the following curvature components tend to zero:
\bea
\betab\to 0,\quad\alphab\to 0.\label{curv.zero}
\eea
Also,
\bea
\chibh=O(\de^{\frac 12}),\quad \trchb\to -\frac{2}{r},\label{chibh-limit}
\eea
$\chih$ is of the order $O(\de^{-\frac 12})$ and all the remaining Ricci coefficients are at least $O(1)$.

By \eqref{chibh-limit}, in the limit $\de\to 0$
$$\Ls_{\Om e_3} \gamma=2\chib=-\frac{2}{r} \gamma $$
holds along the incoming null hypersurface $\Hb_{\de}$, where $\gamma$ is a induce Riemannian metric on $S_{u,\de}$.

It follows that at the $\de\to 0$ limit, the metric $\gamma'$ on the sphere defined by $\{(\ub,u,\omega): \ub=\de, 1-u=R(\omega)\}$ is conformal to the metric on the standard sphere $\gamma_0$ with the conformal factor given by
\bea
\gamma' = R^2 \gamma_0.\label{conf.eqn}
\eea

Let $K'$  denote  the Gauss curvature of the sphere $\{(\ub,u,\omega): \ub=\de, 1-u=R(\omega)\}$ and $K_0$  be  the Gauss curvature on the initial sphere $S_{0,0}$. Then, in view of   \eqref{conf.eqn},   
\bea
K'=R^{-2}(K_0-\De_{0} (\log R))   = R^{-2}(1-\De_{0} (\log R))         .\label{eq:Gauss-conf}
\eea
Consider the Gauss equation for the sphere $\{(\ub,u,\omega): \ub=\de, 1-u=R(\omega)\}$,
\bea
K'=-\rho'+\frac 12 \chih'\cdot\chibh'-\frac 14 \trch'\trchb'.\label{Gauss-eq1}
\eea
Also the Gauss equation for the  $2$-spheres adapted to the original foliation,
\bea
K=-\rho+\frac 12 \chih\cdot\chibh-\frac 14 \trch\trchb.\label{Gauss-eq2}
\eea

We will derive $\trch'$ in terms of $\trch$ and $R$ using equations \eqref{Gauss-eq1} and \eqref{Gauss-eq2}.
By definition, $e_a'=e_a- \Om   e_a(R)e_3$ and $e_3'=e_3$. Since $e_3$ is normal to $e_a$, $e_a'$ and $e_3$, and
$$D_{e_3} e_3=-2\om e_3,$$
we have
$$\chib'_{ab}=\frac 12 g(D_{a'} e_3, e_b')=\frac 12 g(D_{a} e_3, e_b)=\chib_{ab}.$$
Therefore, as $\de\to 0$, 
$$\trchb'=\trchb\to -\frac 2r$$
and
\bea
|\chibh'|= O(\de^{\frac 12}).\label{chibhp.est}
\eea
On the other hand, $e_4'=e_4-2 \Om e_a(R) e_a+ \Om^{2} |\nab R|^2 e_3$, $\chih'$ can be expressed as $\chih$ plus a linear combination of the Ricci coefficients with respect to the original frame which are bounded independent of $\de$. Therefore,
\bea
|\chih'-\chih| = O(1),\label{chihp.est}
\eea
where the implicit constant may depend on $R$. Note that \eqref{chibhp.est} and \eqref{chihp.est} together imply in the limit $\de\to 0$,
$$|\chih\cdot\chibh-\chih'\cdot\chibh'|\to 0.$$

Since $e_4'=e_4-2 \Om e_a(R) e_a+ \Om^{2} |\nab R|^2 e_3$ and $e_3'=e_3$, $\rho'-\rho$ can be written as a linear combination of $\betab$ and $\alphab$. Thus, by \eqref{curv.zero}, in the limit $\de\to 0$,
$$\rho'\to \rho.$$
In the  $\de\to 0$ limit
 the Gauss equations \eqref{Gauss-eq2} becomes 
\beaa
\rho&=&-K     -\frac 14 \trch\trchb     =-R^{-2}+\frac{1}{2R}\trch
\eeaa
On the other hand, the  $\de\to 0$ limit
 of the Gauss equations \eqref{Gauss-eq1} becomes 
\beaa
K'=-\rho'-\frac 14 \trch'\trchb'=-\rho+\frac{1}{2 R}\trch'=     R^{-2}-\frac{1}{2R}\trch         +\frac{1}{2 R}\trch'
\eeaa
Thus, making use of \eqref{eq:Gauss-conf}, we have
\beaa
\trch'&=&\trch-\frac{1}{2R}+2R K'=\trch-\frac{2}{R}\De_0 (\log R)\\
&=&\trch+ \frac 2 R (-\frac{\De_0 R}{R}+\frac{|\nab_0 R|^2}{R^2}).
\eeaa
Combining this with Proposition \ref{prop:Christ}, we thus have
$$\trch'=\frac{2}{R}(1-\frac{\De_0 R}{R}+\frac{|\nab_0 R|^2}{R^2})-\frac{M_0}{R^2}$$
in the limit $\de\to 0$.
   
\section{Solutions to the deformation equation on $S_{0,0}$}\label{sec.solution}
To prove   our main     Theorem  \ref{Main:thm}  it suffices\footnote{Assuming    $\delta\delta_0^{-1}$, $\de_0$  sufficiently small, depending on $M_0$.  }   now to     show  that if $M_0$
verifies  the   the assumption \eqref{eq:main-thm2}, then an appropriate solution  to  the
differential inequality on the standard sphere  $S=S_{0,0}$,
\bea
 -\lap R+ R^{-1}|\nab R|^2+R<  2^{-1}  M_0, \label{trapped-Ineq2}
  \eea
  can be found.\footnote{From this point onwards, we drop the subscript $_0$ in the connection coefficients as it will be clear from context that we consider the connection coefficients associated to the initial sphere $S_{0,0}$.}

Let $R=e^{-\phi}$.
Then the main deformation equation \eqref{trapped-Ineq2} reduces to
\bea
\lap \phi+1< \frac 12 M_0 e^\phi\label{main.ineq}.
\eea
We show below  that (\ref{main.ineq}) can be solved 
as long as $M_0\geq 0 $ and $M_0\geq M_*>0$ on some open ball of  $S$. Our approach provides an explicit construction using the Green's function for the Laplacian on $S$. The main observation is that given any function $\tilde{\phi}$, there exists a sufficiently large constant $C$ such that \eqref{main.ineq} is satisfied by $\phi=\tilde{\phi}+C$ on the set where $M_0$ has a positive lower bound.
It is therefore sufficient first to construct a function $\tilde{\phi}$ satisfying \eqref{main.ineq} only on the complement of the set where $M_0$ has a positive lower bound. It turns out that an appropriately rescaled and cut-off version of the Green's function for the Laplacian satisfies this property.

We prove the following proposition, which together with Theorem \ref{thm5},
implies our main theorem (Theorems \ref{Main:thm} and \ref{main.thm.quan}):

\begin{proposition}\label{prop.Mep}
Let  $\displaystyle M_*=\min_{B_p(\ep)}  M_0$.
Then there exists a function $\phi_{\ep,M_*}$  verifying 
  the inequality \eqref{main.ineq}  and such that 
  \bea
  \phi_{\ep,M_*}& \leq & \log(\frac{1}{M_* \ep^5})+O(1)  \label{condphi1}\\
 | \nab\phi_{\ep,M_*}|&=&O(\ep^{-1}), \quad |\nab^2\phi_{\ep,M_*}|=O(\ep^{-2}).
  \eea
\end{proposition}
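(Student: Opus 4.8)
Following the reduction sketched just before the statement, I would look for $\phi$ of the form $\phi=\tilde\phi+C$, with $\tilde\phi$ a fixed barrier on $(S,\ga)$ depending only on $p$ and $\ep$, and $C$ a large additive constant carrying the $M_*$-dependence. First, the reduction: it suffices to build $\tilde\phi\in C^\infty(S)$ with $\lap\tilde\phi+1<0$ on $S\setminus B_p(\ep)$, together with control of $\sup_S\tilde\phi$, $\sup_{B_p(\ep)}\lap\tilde\phi$, $-\inf_{B_p(\ep)}\tilde\phi$, $\|\nab\tilde\phi\|_{L^\infty}$ and $\|\nab^2\tilde\phi\|_{L^\infty}$. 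Indeed, with $\phi=\tilde\phi+C$: on $S\setminus B_p(\ep)$, since $M\ge 0$, $\lap\phi+1=\lap\tilde\phi+1<0\le Me^\phi$; on $B_p(\ep)$, where $M\ge M_*$, $\lap\phi+1\le 1+\sup_{B_p(\ep)}\lap\tilde\phi$ while $Me^\phi\ge M_*e^{C}e^{\inf_{B_p(\ep)}\tilde\phi}$, so \eqref{main.ineq} holds once
$$C>\log\!\big(1+\sup_{B_p(\ep)}\lap\tilde\phi\big)-\log M_*-\inf_{B_p(\ep)}\tilde\phi .$$
Taking $C$ to be the right-hand side $+1$ gives a global solution with $\phi\le\sup_S\tilde\phi+C$ and $\nab^k\phi=\nab^k\tilde\phi$, so all assertions of the proposition reduce to estimates on $\tilde\phi$.

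\emph{The barrier.} I would take $\tilde\phi=\lambda\,\widehat{\mathcal G}_p$, where $\mathcal G_p$ is the Green's function of $\lap$ on the round sphere $(S,\ga)$ of radius $r_0$, normalized by $\lap\mathcal G_p=\delta_p-|S|^{-1}$, $\int_S\mathcal G_p=0$ --- which on the round sphere is explicit, $\mathcal G_p=\tfrac1{2\pi}\log\sin\!\big(\tfrac{d(\cdot,p)}{2r_0}\big)+c_{r_0}$, hence $\mathcal G_p=\tfrac1{2\pi}\log d(\cdot,p)+O_{r_0}(1)$ near $p$ and $\sup_S\mathcal G_p=O_{r_0}(1)$ --- and $\widehat{\mathcal G}_p$ is the cut-off agreeing with $\mathcal G_p$ for $d(\cdot,p)\ge\ep$ and flattened to the constant $\mathcal G_p|_{d=\ep}$ on $\{d\le\ep/2\}$ via a radial cut-off $\zeta$ with $|\nab^k\zeta|\lesssim\ep^{-k}$. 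The constant $\lambda$ is taken just above $|S|$, so that $\lap\tilde\phi=-\lambda|S|^{-1}<-1$ on $S\setminus B_p(\ep)$: this is precisely the barrier property needed above. The remaining estimates are then routine: $\sup_S\tilde\phi=\lambda\sup_S\mathcal G_p=O_{r_0}(1)$; on $B_p(\ep)$, $\widehat{\mathcal G}_p$ stays within $O_{r_0}(1)$ of $\mathcal G_p|_{d=\ep}$, so $-\inf_{B_p(\ep)}\tilde\phi=\lambda\,\tfrac1{2\pi}\log\tfrac1\ep+O_{r_0}(1)$ and $\sup_{B_p(\ep)}|\lap\tilde\phi|\lesssim\ep^{-2}$; and the bounds $|\nab\tilde\phi|=O(r_0^{-2}\ep^{-1})$, $|\nab^2\tilde\phi|=O(r_0^{-2}\ep^{-2})$ follow from those on $\nab^k\mathcal G_p$ (largest at distance $\ep$ from $p$) and on $\nab^k\zeta$. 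Inserting these into the displayed formula for $C$ expresses $C$ in terms of $\ep$, $M_*$, $r_0$ only; the task, discussed next, is to see that it comes out to $\log\!\big(r_0^2/(M_*\ep^2)\big)+O(1)$, after which $\phi=\tilde\phi+C$ satisfies \eqref{main.ineq} together with $\phi\le\log\!\big(r_0^2/(M_*\ep^2)\big)+O(1)$ and the claimed derivative bounds.

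\emph{The main obstacle.} The construction itself is elementary; the real content is the sharpness of $C$. By the reduction, $C$ is governed by $-\inf_{B_p(\ep)}\tilde\phi+\log\sup_{B_p(\ep)}\lap\tilde\phi-\log M_*$, i.e.\ by the depth of the cut-off logarithmic well of $\mathcal G_p$ together with the strength of the ``source'' manufactured by the cut-off. The delicate point is to arrange the cut-off scale, and $\lambda$ just above $|S|$, so that these combine to exactly $\log\!\big(r_0^2/(M_*\ep^2)\big)+O(1)$ --- the threshold that matches the Penrose-type inequality \eqref{Penrose-ineq} --- rather than a larger power of $\ep^{-1}$; this is where the explicit form of the round-sphere Green's function (the exact constant $-\lambda|S|^{-1}$ of its Laplacian off the pole, and the value $\tfrac1{2\pi}\log\ep+O_{r_0}(1)$ at distance $\ep$) is essential. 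A minor additional point is to check that the \emph{strict} inequality \eqref{main.ineq} holds separately in each of the three regions $B_p(\ep/2)$, the cut-off annulus, and $S\setminus B_p(\ep)$, and that ``the set where $M$ has a positive lower bound'' may indeed be taken to be all of $B_p(\ep)$.
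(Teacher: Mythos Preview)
Your approach is essentially identical to the paper's: the paper takes the Green's function $w$ on $S$ (normalized by $\lap w+\tfrac{1}{2r_0^2}=2\pi\de_p$), cuts it off inside $B_p(\ep)$ to obtain $\tilde w_\ep$, scales by $3r_0^2$ (your ``$\lambda$ just above $|S|$'') so that $\lap\tilde\phi_\ep+1=-\tfrac12<0$ on $S\setminus B_p(\ep)$, and then adds the constant $-\log s_{\ep,M_*}$ to handle $B_p(\ep)$. Your concern about the sharp size of the additive constant is well-placed---the paper's own choice $s_{\ep,M_*}=cM_*\ep^{2+3r_0^2}/r_0^2$ carries exactly the extra power of $\ep^{-1}$ (coming from the depth $\sim r_0^2\log(1/\ep)$ of the logarithmic well) that you anticipate.
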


\subsection{Proof of the main theorem} Returning to our task  of constructing a trapped surface, 
 note that the upper bound  \eqref{condphi1}  for  $\phi$   corresponds  
  to our  desired   lower bound  for  $R$.
 More precisely, \eqref{condphi1} 
implies that for some $C>0$, 
$$\max_S \frac{1}{R}\leq e^{\max_S \phi}\leq \frac{C }{M_* \ep^5}.$$
In particular, for $M_0$ satisyfing the assumption of Theorem \ref{main.thm.quan}, i.e., 
$$\inf_{B_p(\ep)} M_0 \geq M_*> 0,$$
the proposition implies the existence of   a function  $R$ verifying \eqref{trapped-Ineq2}  and a lower bound $R> c_0{M_* \ep^5}$, for some constant $c_0$. 
Therefore, by Theorem \ref{Chr.existence.thm}, given $M_*$ and $\ep$, we can choose $\de$ sufficiently small such that the spacetime solution for the characteristic initial value problem remains smooth in $\DD(u_*,\de)$ for $u_*= 1-r_*<1-c_0{M_* \ep^5}$. This guarantees that the sphere $\{(\ub,u,\omega) : \ub=\delta, 1-u = R(\om)\}$ lies within $\DD(u_*,\de)$. Moreover, given a function $R$ verifying \eqref{trapped-Ineq2}, the term $C\de_0$ in \eqref{maineqR} can be made arbitrarily small by choosing $\de$ small. Thus, by Theorem \ref{thm5}, the sphere defined by $\{(\ub,u,\omega) : \ub=\delta, 1-u = R(\om)\}$ is a trapped surface in $\DD(u_*,\de)$. Since $R \geq r_*$, the constructed trapped surface has area at least $\gtrsim r_*^2 \approx M_*^2 \ep^{10}$.

This concludes the proof of the main theorem. It thus remains to prove Proposition    \ref{prop.Mep}.

\subsection{Proof of Proposition   \ref{prop.Mep}}
To this end,   we  use the following bounds for   the Green's function of  the standard  sphere $S$. 
\begin{lemma}\label{le:construct-w}
Given a point $p$ in the standard unit sphere, define $\lambda$ to be the distance function from $p$.
Then the function given by
$$w=\sin(\log(\frac{\lambda}{2}))$$
satisfies
\bea
\Delta_d w +\frac{1}{2 }= 2\pi \de_p  \label{weqn}
\eea
where $\Delta_d$ is the distributional Laplacian on the standard sphere
and $\de_p$ is the  Dirac measure at $p$.
Moreover, $w$ obey the following bounds:
$$w= \log \lambda+O(1),\quad |\nab w|=O(\la^{-1}),\quad |\nab^2 w|=O(\la^{-2}).$$           
\end{lemma}

\begin{remark}
In a more general setting where the metric $\ga$ on $S$ is not the metric on the standard unit sphere, we can still use the Green's function to construct a desired solution to the elliptic inequality. More precisely, given a smooth  riemannian  metric $\ga$  on $S$,
 there exists a function $w$, smooth outside the point $p$,   such that 
\bea
\Delta_d w +\frac{1}{2r_0 }= 2\pi \de_p  \label{weqn2}
\eea
where $\Delta_d$ is the distributional Laplacian  associated to  the metric $\ga$, 
$\de_p$ is the  Dirac measure at $p$, and $r_0$ is defined by $\mbox{Area}(S)= 4\pi r_0$.
Moreover, if $\la_p$ denotes the  distance function  from $p$,
\bea
w &=& \chi  \log \la_p  +       v
\eea
with      $v$  smooth in    $S\setminus\{p\} $     and satisfying
$$|v|\leq C,\quad |\nab v|=o(\la^{-1}),\quad |\nab^2 v|=o(\la^{-2});$$           
and $\chi $ a smooth cut-off function identically equals to 1 in a small neighborhood of $p$.

We refer the readers to Theorem 4.13 in \cite{Au} for a proof of this fact.
\end{remark}

Using Lemma \ref{le:construct-w}, we now proceed to the proof of Proposition \ref{prop.Mep}:

\begin{proof}[Proof of Proposition \ref{prop.Mep}]
Consider the cut-off function
 \beaa
\begin{cases}
&\chi_\ep=0\quad \mbox{on} \quad B_p(\ep/2)\\
&\chi_\ep=1\quad \mbox{on} \quad S\setminus B_p(\ep)
\end{cases}
\eeaa
and define $\tilde w_\ep=    \chi_\ep    w +  (1-\chi_\ep) \log \ep$.  Note\footnote{Note in particular  the logarithmic cancellation  in the formulas for
$\nab \tilde w_\ep$ and $\nab^2 \tilde w_\ep$.}
 for  that $\tilde w_\ep$  verifies the following
properties:
\bea
\begin{cases}
&\tilde w_\ep=\log\ep,\qquad\qquad\qquad \,\, \qquad \mbox{on}\qquad   B_p(\ep/2)\\
&\tilde w_\ep= \log\ep+O(1),\qquad\qquad  \mbox{on}\quad         B_p(\ep)\setminus       B_p(\ep/2)        \\
&\tilde w_\ep= \log \la +O(1),\qquad\qquad  \mbox{on}\quad         S \setminus       B_p(\ep)        \\
&\nab \tilde w_\ep= O(\ep^{-1}),\qquad  \,\, \qquad \mbox{on}\quad   
        S\setminus       B_p(\ep/2) \\
&\nab^2 \tilde w_\ep=  O(\ep^{-2}),\qquad  \,\, \qquad \mbox{on}\quad   
        S\setminus       B_p(\ep/2) \\
& \lap \tilde w_\ep +\frac{1}{2}=0,    \qquad \qquad  \quad   \qquad \mbox{on}\quad          S\setminus   B_p(\ep)
 \end{cases}
\eea
Consider now the function $ \tilde \phi_\ep=3 \tilde w_\ep$ and 
observe that, on $S\setminus   B_p(\ep)$, we must have,
$$ \lap \tilde \phi_\ep +1=-  \frac 32 +1<0.$$
Thus, we  have,
  \bea \label{phiepest}
\begin{cases}
&\tilde \phi_\ep= 3\log\ep,\qquad\qquad\qquad \,\, \qquad \mbox{on}\qquad   B_p(\ep/2)\\
&\tilde \phi_\ep= 3\log\ep+O(1),\qquad\qquad  \mbox{on}\quad         B_p(\ep)\setminus       B_p(\ep/2)        \\
&\tilde \phi_\ep= 3\log \la +O(1),\qquad\qquad  \mbox{on}\quad         S \setminus       B_p(\ep)        \\
&\nab \tilde \phi_\ep= O(\ep^{-1}),\qquad  \,\, \qquad \mbox{on}\quad   
        S\setminus       B_p(\ep/2) \\
&\nab^2 \tilde \phi_\ep=  O(\ep^{-2}),\qquad  \,\, \qquad \mbox{on}\quad   
        S\setminus       B_p(\ep/2) \\
& \lap \tilde \phi_\ep +1<0,    \qquad \qquad  \quad   \qquad \mbox{on}\quad          S\setminus   B_p(\ep)
 \end{cases}
\eea

Finally, we add a large constant to the function $\tilde{\phi_\ep}$ to obtain the desired solution to the inequality \eqref{main.ineq}. More precisely, we let 
\bea
\phi_{\ep, M_*}=-\log s_{\ep, M_*}+\tilde \phi_\ep,\label{phidef}
\eea
where $s_{\ep, M_*}$ is some small constant to be chosen later. 

Clearly, 
$$\lap \phi_{\ep,M_*}=\lap \tilde \phi_\ep.$$
Therefore, on $S\setminus B_p(\ep)$,
$$\lap \phi_{\ep,M_*}+1=\lap \tilde \phi_\ep +1<0.$$
By \eqref{phiepest}, there exists a positive constant $C'_0$ such that, everywhere,
$$\lap \tilde{\phi}_\ep\leq \frac{C'_0}{\ep^2}.$$
Thus, on $B_p(\ep)$,
$$\lap \phi_{\ep,M_*}+1=\lap \tilde \phi_\ep +1<\frac{C_0}{\ep^{2}}$$
for some positive constant $C_0>0$.      To ensure our inequality  \eqref{main.ineq}   we  need to make sure that  $\frac{C_0}{\ep^{2}}$    does not exceed $\frac 12 M_* e^{\phi_{\ep,M_*}}$ on $B_p(\ep)$.    We therefore need,
$$\inf_{B_p(\ep)} \phi_{\ep,M_*} \geq \log(\frac{2 C_0}{M_*\ep^2}).$$
Recall that \eqref{phiepest} implies
$$\inf_{B_p(\ep)} \phi_{\ep,M_*}\geq -\log s_{\ep, M_*}+3\log \ep-\log C_1$$
for some $C_1>0$. Therefore, it suffices to choose
\bea
s_{\ep, M_*}=\frac{M_* \ep^{5}}{2 C_0 C_1} \label{phidef2}
\eea
and \eqref{main.ineq} is verified everywhere on $S$.

Finally, we use \eqref{phiepest}, \eqref{phidef} and \eqref{phidef2} to check that $\phi_{\ep, M_*}$ obeys the bounds asserted in the proposition. By \eqref{phiepest}, we have the one-sided bound
$$\phi_{\ep} \leq C,$$
for some $C>0$. This, together with \eqref{phidef} and \eqref{phidef2} implies
\bea
\sup_S \phi_{\ep,M_*} \leq -\log (\frac{M_* \ep^{5}}{C_0 C_1})+O(1) = \log(\frac{1}{M_* \ep^5})+O(1).
\eea
For the bounds for the first and second derivatives of $\phi_{\ep,M_*}$, notice that since $s_{\ep,M_*}$ is a constant, by \eqref{phiepest}, we have
\beaa
|\nab\phi_{\ep,M_*}|&=&|\nab\phi_{\ep}|=O(\ep^{-1}),\\
|\nab^2\phi_{\ep,M_*}|&=&|\nab^2\phi_{\ep}|=O(\ep^{-2}).
\eeaa
\end{proof}

\section{Acknowledgements}    
We would like to thank Demetrios Christodoulou for   his   helpful  suggestions      on a previous version of the manuscript.
We would also like to thank   Pin Yu,     Ovidiu Savin and Stefanos Aretakis for    helpful    remarks and  useful   discussions. 

S Klainerman would like to thank the Fondation SMP for the support of this work. 
S. Klainerman is supported by    the   NSF  grant   DMS-0901250.
J. Luk is supported by the NSF Postdoctoral Fellowship DMS-1204493.
I. Rodnianski  is supported by the NSF grant DMS-1001500. S. Klainerman and I. Rodnianski are
also  supported  by the FRG grant  DMS-1065710.

\end{document}